\begin{document}

\frontmatter          
\pagestyle{headings}  
%

%
\mainmatter              
\title{A 3-player protocol preventing persistence in strategic contention with limited feedback}
%
%
\author{George Christodoulou \inst{1} \and Martin Gairing \inst{1} \and Sotiris Nikoletseas \inst{2,3} \and \\ Christoforos Raptopoulos \inst{2,3} \and Paul Spirakis \inst{1,2,3}}
%
%
%
\institute{Department of Computer Science, University of Liverpool, United Kingdom\\
\email{G.Christodoulou@liverpool.ac.uk, gairing@liverpool.ac.uk, P.Spirakis@liverpool.ac.uk}
\and
Computer Engineering and Informatics Department, University of Patras, Greece\\
\and
Computer Technology Institute \& Press ``Diophantus'', Greece\\
\email{nikole@cti.gr, raptopox@ceid.upatras.gr}
}

\maketitle              

\begin{abstract}
In this paper, we study contention resolution protocols from a game-theoretic perspective. In a recent work \cite{ESA2016}, we considered {\em ac\-knowledgment-based} protocols, where a user gets feedback from the channel only when she attempts transmission. In this case she will learn whether her transmission was successful or not. One of the main results of \cite{ESA2016} was that no acknowledgment-based protocol can be in equilibrium. In fact, it seems that many natural acknowledgment-based protocols fail to prevent users from unilaterally switching to persistent protocols that always transmit with probability 1. It is therefore natural to ask how powerful a protocol must be so that it can beat persistent deviators.

In this paper we consider \emph{age-based} protocols, which can be described by a sequence of probabilities of transmitting in each time step. Those probabilities are given beforehand and do not change based on the transmission history. We present a 3-player age-based protocol that can prevent users from unilaterally deviating to a persistent protocol in order to decrease their expected transmission time. It is worth noting that the answer to this question does not follow from the results and proof ideas of \cite{ESA2016}. Our protocol is non-trivial, in the sense that, when all players use it, finite expected transmission time is guaranteed. In fact, we show that this protocol is preferable to any deadline protocol in which, after some fixed time, attempt transmission with probability 1 in every subsequent step.
An advantage of our protocol is that it is very simple to describe, and users
only need a counter to keep track of time.
Whether there exist $n$-player age-based protocols that do not use counters and can prevent persistence is left as an open problem for future research.

\keywords{contention resolution, age-based protocol, persistent deviator, game theory}
\end{abstract}

\section{Introduction}

A fundamental problem in networks is {\em contention resolution} in multiple access channels. In such a setting there are multiple users that want to communicate with each other by sending messages into a multiple access channel (or broadcast channel). The channel is not centrally controlled, so two or more users can transmit their messages at the same time, in which case there is a collision and no transmission is successful. The objective in contention resolution is the design of \emph{distributed protocols} for resolving such conflicts, while simultaneously optimizing some performance measure, like channel utilization or average throughput.

Following the standard assumption in this area, we assume that time is
discrete and messages are broken up into fixed sized packets, which
fit exactly into one time slot. In fact, we consider one of the
simplest possible scenarios where each user only needs to send a
single packet through the channel. Most studies on distributed
contention resolution protocols (see Section~\ref{sec:related-work})
are based on the assumption that users will always follow the
algorithm. In this paper, following~\cite{Fiat07} we drop this
assumption, and we assume that a player will only obey a protocol if
it is in her best interest, given the other players stick to the
protocol. Therefore, we model the situation from a game-theoretic
perspective, i.e. as a stochastic game with the users as selfish
\emph{players}.


One of the main results of Fiat, Mansour, and Nadav~\cite{Fiat07} was the design of an incentive-compatible transmission protocol which guarantees that (with high probability) all players will transmit successfully in time linear in the number of players $n$. The authors assume a \emph{ternary} feedback channel, i.e. each player receives feedback of the form $0/1/2^+$ after each time step, indicating whether zero, one, or more than one transmission was attempted. In a related paper, Christodoulou, Ligett and Pyrga~\cite{CLP14} designed efficient $\epsilon$-equilibrium protocols under a stronger assumption that each player receives as feedback the number of players that attempted transmission; this is called {\em multiplicity} feedback. They also assume non-zero transmission costs, in which case the protocols of~\cite{Fiat07} do not apply.

All of the protocols defined in the above two works belong to the class of {\em full-sensing} protocols~\cite{G03}, in which the channel feedback is broadcasted to all sources. However, in wireless channels, there are situations where full-sensing is not possible because of the \emph{hidden-terminal problem} \cite{TK75}. In a previous work \cite{ESA2016}, we considered {\em acknowledgment-based} protocols, which use a more limited feedback model -- the only feedback that a user gets is whether her transmission was successful or not. A user that does not transmit cannot ``listen'' to the channel and therefore does not get any feedback. In other words, the only information that a user has is the history of her own transmission attempts. Acknowledgment-based protocols have been extensively studied in the literature (see e.g. \cite{G03} and references therein). 

Our main concern in \cite{ESA2016} was the existence of acknowledgment-based protocols that are in equilibrium. For $n=2$ players, we showed that there exists such a protocol, which guarantees finite expected transmission time. Even though the general question for more than 2 players was left open in \cite{ESA2016}, we ruled out that such a protocol can be \emph{age-based}. Age-based protocols are a special case of acknowledgment-based protocols and can be described by a sequence of probabilities (one for each time-step) of transmitting in each time step. Those probabilities are given beforehand and do not change based on the transmission history. The well known ALOHA protocol~\cite{Abr70} is a special age-based protocol, where -- except for the first round -- users always transmit with the same probability. Since an age-based protocol ${\cal P}$ cannot be in equilibrium, it is beneficial for players to deviate from ${\cal P}$ to some other protocol.  In fact, most natural acknowledgment-based protocols fail to prevent users from unilaterally switching to the persistent protocol that always transmits with probability 1. It is therefore natural to ask how powerful a protocol must be with respect to memory (and feedback) in order to be able to prevent persistent deviators.

\subsection{Our Contribution}
\label{sec:contribution}
The question that we consider in this paper is whether there exist age-based protocols that can prevent users from unilaterally deviating to a persistent protocol (in which they attempt a transmission in every step until they successfully transmit) in order to decrease their expected transmission time. In particular, such protocols should be non-trivial, in the sense that using the protocol should guarantee a finite expected transmission time for the users. It is worth noting that the answer to this question does not follow from the results and proof ideas of \cite{ESA2016}. We give a positive answer for the case of 3 players (users), by presenting and analyzing such a protocol (see definition below). In particular, we show that this protocol is preferable to any deadline protocol in which, after some fixed time, attempt transmission with probability 1 in every subsequent step.

Let $c \geq 1$ and $p \in [0,1]$ be constants. We define the protocol ${\cal P} = {\cal P}(c, p)$ as follows: the transmission probability ${\cal P}_t$ at any time $t$ is equal to $p$ if $t = \sum_{j=0}^k \lfloor 2c^j \rfloor$, for some $k = 0, 1, \ldots$ and it is equal to 1 otherwise. The intuition behind this protocol is that with every collision it is increasingly harder for remaining users to successfully transmit, and thus ``aggressive'' protocols are suboptimal.

Our main result is the following:

\begin{theorem} \label{maintheorem}
Assume there are 3 players in the system, two of which use protocol ${\cal P}(1.1, 0.75)$. Then, the third player will prefer using protocol ${\cal P}(1.1, 0.75)$ over any deadline protocol ${\cal D}$.   
\end{theorem}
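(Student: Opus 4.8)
The plan is to track the game from the third player's viewpoint as a time-inhomogeneous Markov chain on four states: $S_3$ (all three users present), $S_2$ (the third player together with exactly one $\mathcal{P}$-user), $S_1$ (the third player alone), and an absorbing state ``done'' (the third player has transmitted successfully). Write $T^{\mathcal{P}}$ and $T^{\mathcal{D}}$ for the third player's expected transmission time when she uses $\mathcal{P}(1.1,0.75)$ and when she uses a deadline protocol $\mathcal{D}$, respectively; the goal is to show $T^{\mathcal{P}}<\infty=T^{\mathcal{D}}$. The first structural observation, which I would record as a preliminary lemma, is that while in $S_3$ the two $\mathcal{P}$-users transmit with probability $1$ at every non-checkpoint slot and hence always collide there, so no transition out of $S_3$ (and in particular no success for the third player) can occur at a non-checkpoint slot. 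Consequently the chain can leave $S_3$ only at the checkpoint times $t_k=\sum_{j=0}^{k}\lfloor 2c^{j}\rfloor$, and if at such a slot the third player transmits with probability $q$ then she succeeds with probability $q(1-p)^2$, the chain moves to $S_2$ with probability $2p(1-p)(1-q)$, and otherwise stays in $S_3$.

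I would first dispatch the easy half, $T^{\mathcal{P}}<\infty$. Here every player uses $q=p$, so at each checkpoint the chain leaves $S_3$ with the fixed probability $a=3p(1-p)^2$ and, once in $S_2$, leaves it with the fixed probability $2p(1-p)$ per checkpoint (again non-checkpoint slots only produce collisions). Since the checkpoint times grow geometrically, $t_k\sim\frac{2c}{c-1}\,c^{k}$, the expected finishing time is controlled by sums of the form $\sum_k c^{k}(\text{stay-probability})^{k}$. These converge precisely because $c(1-a)=1.1\cdot 0.859375<1$ and $c\,(1-2p(1-p))=1.1\cdot 0.625<1$ for the chosen constants, and the short $S_1$-phase contributes only $O(1)$; multiplying the two convergent phase-contributions gives $T^{\mathcal{P}}<\infty$.

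The heart of the argument is the other half: $T^{\mathcal{D}}=\infty$ for \emph{every} deadline protocol. Let $D$ be the deadline of $\mathcal{D}$, after which the third player transmits with probability $1$ in every slot. I would establish two facts. First, reaching time $D$ still in state $S_3$ has strictly positive probability: at each of the finitely many checkpoints before $D$ the stay-in-$S_3$ probability is positive no matter what $d_t$ the protocol uses (for instance the all-silent and the all-transmit outcomes each keep the chain in $S_3$ with positive probability), so the product of these stay-probabilities is positive. Second, conditioned on being in $S_3$ at time $D$, the third player's expected remaining time is infinite: once she transmits with probability $1$, an opponent can no longer transmit alone, so the transition $S_3\to S_2$ becomes \emph{impossible} and the only way she ever leaves $S_3$ is by succeeding at a checkpoint where both opponents are silent, which happens with probability only $(1-p)^2$ per checkpoint. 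Because $c\bigl(1-(1-p)^2\bigr)=1.1\cdot 0.9375>1$, the geometrically spaced checkpoints make $\sum_k c^{k}\bigl(1-(1-p)^2\bigr)^{k}$ diverge, so the conditional expected finishing time is $+\infty$; multiplying by the positive probability from the first fact yields $T^{\mathcal{D}}=\infty$.

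The conceptual crux, and the step I expect to require the most care, is this second fact: the key (and somewhat counterintuitive) insight is that a persistent player can never benefit from an opponent clearing the channel, because her own persistence is exactly what prevents any opponent from ever transmitting alone, trapping her in $S_3$ where the rapidly growing inter-checkpoint gaps blow up the expected delay. The remaining technical work is to verify the two threshold inequalities and to make the ``infinite conditional expectation times positive probability'' argument rigorous. In fact the constants are chosen precisely so that $(1-p)^2 < 1-\tfrac1c < 3p(1-p)^2$, i.e.\ $0.0625<0.0909<0.140625$, which is exactly the separation between the divergent persistent regime and the convergent regime of $\mathcal{P}(1.1,0.75)$, and which in particular forces $p>\tfrac13$.
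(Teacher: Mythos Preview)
Your proposal is correct and follows essentially the same approach as the paper: establish $T^{\mathcal{P}}<\infty$ via the convergence condition $c\bigl(1-3p(1-p)^2\bigr)<1$ (together with the analogous two-player condition), establish that a persistent third player has infinite expected latency via $c\bigl(1-(1-p)^2\bigr)>1$ and the key observation that her persistence blocks the $S_3\to S_2$ transition, and then extend to an arbitrary deadline protocol by conditioning on the positive-probability event that all three players are still pending at the deadline. Your Markov-chain phrasing and the remark that the $S_1$-phase is $O(1)$ are minor presentational differences, but the structure and the decisive inequalities are identical to the paper's proof.
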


In addition, we show that the expected transmission time of a fixed player when all players use ${\cal P}(1.1, 0.75)$ is upper bounded by 2759 and is thus finite. We believe that our ideas can be used to give a positive answer also for the case of $n >3$ players, but probably not for too large values of $n$.

An advantage of our protocol is that it is very simple to describe, and users only need a counter to keep track of time. Whether there exist $n$-player age-based protocols using finite memory that can prevent persistence is left as an open problem for future research.

 \subsection{Other Related Work}
 \label{sec:related-work}

 Perhaps the most famous multiple-access communication protocol is the
 (slotted) ALOHA protocol~\cite{Abr70,Rob72}.
Follow-up papers study the efficiency of multiple-access protocols for
packets that are generated by some stochastic process (see
e.g. \cite{355567,310333,903752}), or worst-case scenarios of bursty
inputs \cite{1074023}.

 
The main focus of many contention resolution protocols is on actual
{\em conflict resolution}. In such a scenario, it is assumed that
there are $n$ users in total, and $k$ of them collide. In such an
event, a resolution algorithm is called, which ensures that all the
colliding packets are successfully
transmitted~\cite{capetanakis-a,hayes78,TsM}.
%
There is extensive study on the efficiency of protocols under various
information models (see \cite{G03} for an overview). When
$k$ is known,~\cite{140906} provides an $O(k + \log k \log n)$ {\em
  acknowledgment-based} algorithm, while~\cite{274816} provides a
matching lower bound. For the ternary model,~\cite{GW85} provides a
bound of $\Omega(k(\log n/\log k))$ for all deterministic algorithms.
%

There are various game theoretic models of slotted ALOHA that have
been studied in the literature, apart from the ones mentioned in the
introduction; see for example~\cite{1031826,1154073,Altman05azouzi}.
However, in most of these models only transmission protocols that
always transmit with the same probability are
considered.  
There has been also research on pricing schemes~\cite{1285895} as well
as on cases in which the channel quality changes dynamically with time
and players must choose their transmission levels
accordingly~\cite{Menache08,1288109,AMPP08}. An interesting
game-theoretic model that lies between the contention and congestion
model was studied in \cite{KoutsoupiasP12}; where decisions of {\em
  when} to submit is part of the action space of the players.


\section{Model}
\label{sec:definitions}

\noindent \textbf{Game Structure.} Let $N=\{1, 2, \dots, n\}$ be the set of agents, each one of which has a single packet that he wants to send through a common channel. All players know $n$. We assume time is discretized into slots $t = 1, 2, \ldots$. The players that have not yet successfully transmitted their packet are called \emph{pending} and initially all $n$ players are pending. At any given time slot $t$, a pending player $i$ has two available actions, either to \emph{transmit} his packet or to \emph{remain quiet}. In a \emph{(mixed) strategy}, a player $i$ transmits his packet at time $t$ with some probability that potentially depends on information that $i$ has gained from the channel based on previous transmission
attempts. If exactly one player transmits in a given slot $t$, then
his transmission is \emph{successful}, the successful player exits the
game (i.e. he is no longer pending), and the game continues with the
rest of the players. On the other hand, whenever two or more agents
try to access the channel (i.e. transmit) at the same slot, a
\emph{collision} occurs and their transmissions fail, in which case
the agents remain in the game. Therefore, in case of collision or if
the channel is idle (i.e. no player attempts to transmit) the set of
pending agents remains unchanged. The game continues until all players
have successfully transmitted their packets.

\noindent \textbf{Transmission protocols.}  Let $X_{i, t}$ be the indicator variable that indicates whether player $i$ attempted transmission at time $t$. For any $t \geq 1$, we denote by $\vec{X}_t$ the transmission vector at time $t$, i.e. $\vec{X}_t = (X_{1, t}, X_{2, t}, \ldots, X_{n, t})$. An {\em acknowlegment-based} protocol, uses very limited channel feedback. After each time step $t$, only players that attempted a transmission receive feedback, and the rest get no information. In
fact, the information received by a player $i$ who transmitted during $t$ is whether his transmission was successful (in which case he gets an acknowledgement and exits the game) or whether there was a collision.

Let $\vec{h}_{i, t}$ be the vector of the \emph{personal transmission history} of player $i$ up to time $t$, i.e. $\vec{h}_{i, t} = (X_{i, 1}, X_{i, 2}, \ldots, X_{i, t})$. We also denote by $\vec{h}_t$ the transmission history of all players up to time $t$, i.e. $\vec{h}_t = (\vec{h}_{1, t}, \vec{h}_{2, t}, \ldots, \vec{h}_{n, t})$. In an acknowledgement-based protocol, the actions of player $i$ at time $t$ depend only (a) on his personal history $\vec{h}_{i, t-1}$ and (b) on whether he is pending or not at $t$. A \emph{decision rule} $f_{i, t}$ for a pending player $i$ at time $t$, is a function that maps $\vec{h}_{i, t-1}$ to a probability $\Pr(X_{i, t} = 1 | \vec{h}_{i, t-1})$. For a player $i \in N$, a \emph{(transmission) protocol} $f_i$ is a sequence of decision rules $f_i = \{ f_{i, t}\}_{t \geq 1} = f_{i, 1}, f_{i, 2}, \cdots$.

A transmission protocol is \emph{anonymous} if and only if the
decision rule assigns the same transmission probability to all players
with the same personal history. In particular, for any two players $i
\neq j$ and any $t \geq 0$, if $\vec{h}_{i, t-1} = \vec{h}_{j, t-1}$,
it holds that $f_{i, t}(\vec{h}_{i, t-1}) = f_{j, t}(\vec{h}_{j,
  t-1})$. In this case, we drop the subscript $i$ in the notation,
i.e. we write $f = f_1 = \cdots = f_n$.

We call a protocol $f_i$ for player $i$ \emph{age-based} if and only
if, for any $t \geq 1$, the transmission probability $\Pr(X_{i, t} = 1
| \vec{h}_{i, t-1})$ depends only (a) on time $t$ and (b) on whether
player $i$ is pending or not at $t$. In this case, we will denote the
transmission probability by $p_{i, t} \stackrel{def}{=} \Pr(X_{i, t} =
1 | \vec{h}_{i, t-1}) = f_{i, t}(\vec{h}_{i, t-1})$.

We call a transmission protocol $f_i$ \emph{non-blocking} if and only if, for any $t \geq 1$ and any transition history $\vec{h}_{i, t-1}$, the transmission probability $\Pr(X_{i, t} = 1 | \vec{h}_{i, t-1})$ is always smaller than 1. A protocol $f_i$ for player $i$ is a \emph{deadline protocol with deadline} $t_0 \in \{1, 2, \ldots\}$ if and only if $f_{i, t}(\vec{h}_{i, t-1}) = 1$, for any player $i$, any time slot $t \geq t_0$ and any transmission history $\vec{h}_{i, t-1}$. A \emph{persistent player} is one that uses the deadline protocol with deadline $1$.


\noindent \textbf{Individual utility.} Let ${\vec f} = (f_1, f_2,
\ldots, f_n)$ be such that player $i$ uses protocol $f_i, i \in
N$. For a given transmission sequence $\vec{X}_1, \vec{X}_2, \ldots$,
which is consistent with ${\vec f}$, define the \emph{latency} or
\emph{success time} of agent $i$ as $T_i \stackrel{def}{=}
\inf\{t:X_{i, t} = 1, X_{j, t} = 0, ~ \forall j \neq i\}$. That is,
$T_i$ is the time at which $i$ successfully transmits.  Given
a 
transmission history $\vec{h}_t$, 
the $n$-tuple of protocols ${\vec f}$ induces a probability
distribution over sequences of further transmissions. In that case, we
write $C^{{\vec f}}_i(\vec{h}_t) \stackrel{def}{=}
\mathbb{E}[T_i|\vec{h_t}, {\vec f}] = \mathbb{E}[T_i|\vec{h}_{i, t},
{\vec f}]$ for the expected latency of agent $i$ incurred by a
sequence of transmissions that starts with $\vec{h}_t$ and then
continues based on ${\vec f}$. For anonymous protocols, i.e. when $f_1
= f_2 = \cdots = f_n = f$, we will simply write $C^{f}_i(\vec{h}_t)$
instead\footnote{Abusing notation slightly, we will also write
  $C^{{\vec f}}_i(\vec{h}_0)$ for the \emph{unconditional} expected
  latency of player $i$ induced by ${\vec f}$.
}.

\noindent \textbf{Equilibria.} The objective of every agent is to
minimize her expected latency. We say that ${\vec f} = \{f_1, f_2,
\ldots, f_n \}$ is in \emph{equilibrium} if for any transmission
history $\vec{h}_t$ the agents cannot decrease their expected latency
by unilaterally deviating after $t$; that is, for all agents $i$, for
all time slots $t$, and for all decision rules $f'_i$ for agent $i$,
we have
\begin{displaymath}
C^{{\vec f}}_i(\vec{h}_t) \leq C^{({\vec f}_{-i}, f'_i)}_i(\vec{h}_t),
\end{displaymath}
where $({\vec f}_{-i}, f'_i)$ denotes the protocol
profile\footnote{For an anonymous protocol $f$, we denote by $(f_{-i}, f'_i)$ the
profile where agent $j \neq i$ uses protocol $f$ and agent $i$
uses protocol $f'_i$.} where every
agent $j \neq i$ uses protocol $f_j$ and agent $i$ uses protocol
$f'_i$. 


\section{A 3-player protocol that prevents persistence} \label{sec:persistentwinner}

In this section we prove that there is an anonymous age-based protocol ${\cal P}(c, p)$ for 3 players that has finite expected latency and prevents players from unilaterally switching to any deadline protocol. In what follows, Alice is one of the three players in the system.

For some parameters $c \geq 1$ and $p \in [0,1]$, which will be specified later, we define the protocol ${\cal P} = {\cal P}(c, p)$ as follows:
\begin{equation}
{\cal P}_t = \left\{ 
\begin{array}{ll}
	p, & \quad \textrm{if $t = \sum_{j=0}^k \lfloor 2c^j \rfloor$, for some $k = 0, 1, \ldots$} \\
	1, & \quad \textrm{otherwise.}
\end{array}
\right.
\end{equation}
For $k=0, 1, 2, \ldots$, define the \emph{$k$-th non-trivial transmission time $s_k$} to be the time step on which the decision rule for a pending player using ${\cal P}$ is to transmit with probability $p$. In particular, $s_k \stackrel{def}{=} \sum_{j=0}^k \lfloor 2c^j \rfloor$, by definition of the protocol. 
For technical reasons, we set $s_{k} = 0$, for any $k < 0$.
Furthermore, for $k = 1, 2, \ldots$, define the \emph{$k$-th (non-trivial) inter-transmission time $x_k$} as the time between the $k$-th and $(k-1)$-th non-trivial transmission time, i.e. $x_k \stackrel{def}{=} s_k - s_{k-1} = \lfloor 2c^k \rfloor$. The following elementary result will be useful for the analysis of the protocol. The proof can be found in Appendix \ref{lemma1proof}.

\begin{lemma} \label{dominationlemma}
For any $k, k', j \in \{0, 1, \ldots\}$, such that $k' > k$, and any $c \in [1, 2]$, we have that 
\begin{displaymath}
c^{k'-k-1}(c-1) x_{k+j} \leq x_{k'+j} \leq c^{k'-k-1}(c+1) x_{k+j}.
\end{displaymath}
\end{lemma}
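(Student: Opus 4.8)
The plan is to prove the double inequality
$$c^{k'-k-1}(c-1)\,x_{k+j} \leq x_{k'+j} \leq c^{k'-k-1}(c+1)\,x_{k+j}$$
by expressing everything in terms of the closed form $x_m = \lfloor 2c^m \rfloor$ and controlling the floor function from both sides.

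Let me set up the key bounds. We have $x_m = \lfloor 2c^m \rfloor$, so $2c^m - 1 < x_m \leq 2c^m$. The plan is to bound the ratio $x_{k'+j}/x_{k+j}$.

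Upper bound: $x_{k'+j} \leq 2c^{k'+j}$ and $x_{k+j} > 2c^{k+j} - 1$.

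Lower bound: $x_{k'+j} > 2c^{k'+j} - 1$ and $x_{k+j} \leq 2c^{k+j}$.

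Let me check the target. We want $x_{k'+j} \leq c^{k'-k-1}(c+1) x_{k+j}$.

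Note $c^{k'-k-1}(c+1) = c^{k'-k} + c^{k'-k-1}$. And $c^{k'-k-1}(c-1) = c^{k'-k} - c^{k'-k-1}$.

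Let me verify the upper bound. We have $x_{k'+j} \leq 2c^{k'+j}$. We want this $\leq c^{k'-k-1}(c+1) x_{k+j}$.

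Since $x_{k+j} > 2c^{k+j} - 1$, it suffices to show... hmm, let me think about whether $2c^{k'+j} \leq c^{k'-k-1}(c+1)(2c^{k+j} - 1)$? That's not obviously true since we're using a lower bound on $x_{k+j}$.

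Actually let me reconsider. Let $d = k' - k \geq 1$.

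Upper: want $x_{k'+j} \leq (c^d + c^{d-1}) x_{k+j}$.
We have $x_{k'+j} = \lfloor 2c^{k'+j}\rfloor \leq 2c^{k'+j} = 2c^{k+j} \cdot c^d = c^d \cdot 2c^{k+j}$.
Now $2c^{k+j} < x_{k+j} + 1 \leq x_{k+j} + x_{k+j} = 2x_{k+j}$ when $x_{k+j}\geq 1$. So $2c^{k+j} < 2 x_{k+j}$... that gives $x_{k'+j} < 2c^d x_{k+j}$, too weak.

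Need sharper. $2c^{k+j} \leq x_{k+j} + 1$. So $x_{k'+j} \leq c^d(x_{k+j}+1) = c^d x_{k+j} + c^d$. Want $\leq (c^d + c^{d-1})x_{k+j}$, i.e. $c^d \leq c^{d-1} x_{k+j}$, i.e. $c \leq x_{k+j}$. Since $x_{k+j} = \lfloor 2c^{k+j}\rfloor \geq \lfloor 2 \rfloor = 2 \geq c$ (as $c \leq 2$). So it works.

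Lower: want $x_{k'+j} \geq (c^d - c^{d-1})x_{k+j}$. We have $x_{k'+j} > 2c^{k'+j}-1 = c^d\cdot 2c^{k+j}-1 \geq c^d x_{k+j} - 1$ (since $2c^{k+j}\geq x_{k+j}$). Want $\geq (c^d - c^{d-1})x_{k+j}$, i.e. $c^d x_{k+j} - 1 \geq c^d x_{k+j} - c^{d-1}x_{k+j}$, i.e. $c^{d-1}x_{k+j}\geq 1$. True since $c\geq 1, x_{k+j}\geq 1$. Good.

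\begin{proof}[Proof sketch]
Write $d = k'-k \geq 1$ and recall $x_m = \lfloor 2c^m\rfloor$, so that $2c^m - 1 < x_m \leq 2c^m$ for every $m$, and $x_{k+j} \geq \lfloor 2 \rfloor = 2$ since $c \geq 1$.

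\emph{Upper bound.} From $x_{k'+j} \leq 2c^{k'+j} = c^d\,(2c^{k+j})$ and $2c^{k+j} \leq x_{k+j}+1$, we get $x_{k'+j} \leq c^d x_{k+j} + c^d$. It remains to check $c^d \leq c^{d-1}x_{k+j}$, i.e. $c \leq x_{k+j}$; this holds because $c \leq 2 \leq x_{k+j}$.

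\emph{Lower bound.} From $x_{k'+j} > 2c^{k'+j}-1 = c^d(2c^{k+j})-1 \geq c^d x_{k+j}-1$, it suffices to verify $c^d x_{k+j}-1 \geq (c^d - c^{d-1})x_{k+j}$, i.e. $c^{d-1}x_{k+j}\geq 1$, which holds since $c \geq 1$ and $x_{k+j}\geq 1$.
\end{proof}

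The main obstacle is simply choosing the right-direction bound on each floor (upper bound on the numerator paired with the additive $+1$ slack on the denominator, and vice versa) so that the $\lfloor\cdot\rfloor$ roundoff is absorbed by the gap between $c^{d}$ and $c^{d-1}(c\pm 1)$; the crucial numerical input is $c \leq 2$, which guarantees $x_{k+j} \geq c$ and lets the upper bound close. The hypothesis $c \in [1,2]$ is used precisely here, and the restriction $c \leq 2$ cannot be dropped without weakening the constant.
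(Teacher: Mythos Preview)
Your proof is correct and follows essentially the same approach as the paper's: both arguments bound the floor by $2c^m - 1 < \lfloor 2c^m\rfloor \leq 2c^m$ and absorb the $\pm 1$ roundoff using $x_{k+j}\geq 2 \geq c$, which is exactly where $c\in[1,2]$ enters. The paper merely packages the same computation via integer/fractional parts $A_i+a_i$, whereas you write the inequalities directly; the substance is identical.
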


\subsection{Expected latency for a persistent player} \label{section:persistent}

Assume that Alice is a persistent player, i.e. she uses the deadline protocol $g$ with deadline $1$, i.e. $g_t = 1$, for all $t \geq 1$, while both other players use protocol ${\cal P}$. For $n \in \{1, 2, 3\}, k \in \{0, 1, \ldots\}$, let $Y'_{n, k}$ be the additional time after $s_{k-1}$ that Alice needs to successfully transmit when there are $n$ pending players. 
It is evident that Alice will be the first player to successfully transmit, so there will be no need to calculate $\mathbb{E}[Y'_{2, k}]$ or $\mathbb{E}[Y'_{1, k}]$. 


The proof of the following Theorem can be found in Appendix \ref{appendix:persistentexpectedtime}.

\begin{theorem} \label{theorem:persistentexpectedtime}
If $\frac{1}{1 - (1-p)^2} < c \leq 2$, then $\mathbb{E}[Y'_{3, 0}] = \infty$. That is, the expected latency for Alice when she is persistent and both other players use protocol ${\cal P}(c, p)$ is infinity. 
\end{theorem}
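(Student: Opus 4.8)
The plan is to reduce Alice's latency to a simple geometric random variable supported on the non-trivial transmission times, and then show that the resulting expectation is a divergent geometric series.

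First I would establish the structural fact that, as long as Alice is pending, the system contains exactly three pending players. The key observation is that Alice transmits with probability $1$ at every step, so no other player can ever be the \emph{unique} transmitter while Alice is still pending. Consequently neither $\mathcal{P}$-player can exit the game before Alice, which both justifies the claim that Alice is the first to transmit successfully and confirms that the relevant quantity is $\mathbb{E}[Y'_{3,0}]$. Next I would pin down the only steps at which Alice can succeed: at any time $t$ that is not a non-trivial transmission time, each pending $\mathcal{P}$-player transmits with probability $1$, so Alice necessarily collides. Hence Alice can succeed only at some non-trivial time $s_k$, and there she succeeds precisely when both $\mathcal{P}$-players remain silent, an event of probability $(1-p)^2$. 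Because $\mathcal{P}$ is age-based (its decision rule ignores the transmission history), these silence events are independent across distinct values of $k$.

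Writing $q = (1-p)^2$, it then follows that the index $K$ of the first non-trivial time at which Alice succeeds is geometric, with $\Pr[K = k] = (1-q)^k q$, and that Alice's latency equals $s_K$. Therefore
\[
\mathbb{E}[Y'_{3,0}] = \sum_{k=0}^{\infty} s_k\, (1-q)^k\, q.
\]
To conclude I would lower bound $s_k$ directly from its definition: since $c \geq 1$,
\[
s_k = \sum_{j=0}^{k} \lfloor 2c^j \rfloor \geq \lfloor 2c^k \rfloor \geq 2c^k - 1 \geq c^k,
\]
so that
\[
\mathbb{E}[Y'_{3,0}] \geq q \sum_{k=0}^{\infty} \bigl(c(1-q)\bigr)^k.
\]
The hypothesis $c > \frac{1}{1-(1-p)^2} = \frac{1}{1-q}$ yields $c(1-q) > 1$, whence the geometric series diverges and the expectation is infinite. (The upper bound $c \leq 2$ is only needed to keep the parameter in the regime where Lemma~\ref{dominationlemma} applies and where the protocol remains well-behaved; it plays no role in forcing divergence.)

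The main obstacle is entirely conceptual and lies in the first two steps: once one argues that the configuration stays locked at three pending players and that Alice's successes occur only at the times $s_k$ with independent probability $(1-p)^2$ each, the latency collapses to $s_K$ for a geometric $K$, and the remainder is a routine divergence test. I would expect the only mild subtlety to be handling the floor functions in $s_k$, which the crude bound $\lfloor 2c^k\rfloor \geq c^k$ disposes of without invoking the sharper two-sided estimate of Lemma~\ref{dominationlemma}.
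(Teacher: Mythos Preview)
Your argument is correct, and it is more direct than the paper's formal proof in Appendix~\ref{appendix:persistentexpectedtime}. The paper proceeds via a recursion
\[
\mathbb{E}[Y'_{3,i}] = \lfloor 2c^i \rfloor + \gamma\, \mathbb{E}[Y'_{3,i+1}], \qquad \gamma = 1-(1-p)^2,
\]
telescopes it to $\mathbb{E}[Y'_{3,0}] = \sum_{i=0}^{k'-1}\gamma^i\lfloor 2c^i\rfloor + \gamma^{k'}\mathbb{E}[Y'_{3,k'}]$, and then invokes the lower bound of Corollary~\ref{persistentdomination} (a consequence of Lemma~\ref{dominationlemma}, which is where $c\le 2$ enters) to replace $\mathbb{E}[Y'_{3,k'}]$ by $c^{k'-1}(c-1)\mathbb{E}[Y'_{3,0}]$. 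Letting $k'\to\infty$ then yields a self-referential inequality that cannot hold for a finite expectation. Your route bypasses this machinery entirely: you identify the latency explicitly as $s_K$ with $K$ geometric on $\{0,1,\dots\}$ with success probability $q=(1-p)^2$, and test the series $\sum_k s_k(1-q)^k q$ for divergence via the crude bound $s_k\ge c^k$. This is essentially the observation made in the Remark following Theorem~\ref{theorem:persistentexpectedtime}, carried through to a full proof rather than stopping at Jensen's inequality. The payoff is that your argument is shorter, does not appeal to Lemma~\ref{dominationlemma}, and (as you correctly note) does not use the hypothesis $c\le 2$ for the divergence itself; the paper's version is more uniform with the finiteness proofs for $\mathbb{E}[Y_{n,k}]$ that follow, all of which run through the same recursion-plus-domination template.
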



\begin{remark} At first glance, the above result may seem surprising. Indeed, let $Z$ denote the number of times that the persistent player has a collision whenever the other players transmit with probability $p$ (i.e. we do not count collissions when the other two players transmit with probability 1, which causes certain collision). It is easy to see that $Z+1$ is a geometric random variable with probability of success $(1-p)^2$. Therefore, $\mathbb{E}[Z+1] = \frac{1}{(1-p)^2}$ is finite! On the other hand, it is not hard to see that the (actual) time $Y'_{3, 0}$ needed for the persistent player to successfully transmit is given by $Y'_{3, 0} = \sum_{j=0}^{Z} \lfloor 2c^j \rfloor$. In particular, $Y'_{3, 0}$ is a strictly convex function of $Z$, for any $c>1$, and so, by Jensen's inequality (see e.g. \cite{RossIntrobook}) $\mathbb{E}[Y'_{3, 0}] > \sum_{j=0}^{\mathbb{E}[Z]} \lfloor 2c^j \rfloor$.
\end{remark}

\subsection{Expected latency when all players use ${\cal P}(c, p)$}

Assume that all three players use protocol ${\cal P}$. For $n \in \{1, 2, 3\}, k \in \{0, 1, \ldots\}$, let $Y_{n, k}$ be the additional time after $s_{k-1}$ that Alice needs to successfully transmit, when there are $n$ pending players. 
The following corollary, which is a direct consequence of Lemma \ref{dominationlemma}, will be useful for our analysis.

\begin{corollary} \label{usualdomination}
For any $n \in \{1, 2, 3\}$, any $k, k' \in \{0, 1, \ldots\}$ with $k' > k$, and $c \in [1, 2]$ we have that 
\begin{displaymath}
c^{k'-k-1}(c-1) \mathbb{E}[Y_{n, k}] \leq \mathbb{E}[Y_{n, k'}] \leq c^{k'-k-1}(c+1) \mathbb{E}[Y_{n, k}].
\end{displaymath}
\end{corollary}

The main purpose of this section is to prove Theorem \ref{theorem:3playersexpectedtime}. To do this, we need to consider $\mathbb{E}[Y_{n, k}]$, for all values of $n \in \{1, 2, 3\}, k \in \{0, 1, \ldots\}$.

\subsubsection{The case $n=1$.} When only Alice is pending, we have
\begin{eqnarray}
\mathbb{E}[Y_{1, k}] & = & \lfloor 2c^k \rfloor p + \left( \sum_{j=k}^{k+1} \lfloor 2c^j \rfloor \right) p(1-p) + \ldots \nonumber \\
& = & \sum_{\ell = k}^{\infty} \left(\left( \sum_{j=k}^{\ell} \lfloor 2c^j \rfloor \right) p(1-p)^{\ell-k} \right) \nonumber \\
& \leq & \sum_{\ell = k}^{\infty} \left(\left( \sum_{j=k}^{\ell} 2c^j \right) p(1-p)^{\ell-k} \right) \nonumber \\
& = & \sum_{\ell = k}^{\infty} \left(\left( 2 \frac{c^{\ell+1} - c^k}{c-1} \right) p(1-p)^{\ell-k} \right) \nonumber \\
& \leq & \frac{2c p}{(c-1) (1-p)^k} \sum_{\ell = k}^{\infty} \left( c^{\ell} (1-p)^{\ell} \right). \label{eq-upperX1k}
\end{eqnarray}
In particular, by the above inequality, we have the following:

\begin{lemma} \label{lemma:1playerexpectedtime}
If $1<c< \frac{1}{1-p}$, then $\mathbb{E}[Y_{1, k}]$ is finite, for any (finite) $k$.
\end{lemma}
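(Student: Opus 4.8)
The plan is to read off finiteness directly from the upper bound already established in~\eqref{eq-upperX1k}, so the lemma becomes a one-line convergence check. Recall that the computation preceding the statement gives
\[
\mathbb{E}[Y_{1, k}] \leq \frac{2c p}{(c-1) (1-p)^k} \sum_{\ell = k}^{\infty} \bigl( c(1-p) \bigr)^{\ell},
\]
so the entire question reduces to whether the geometric series on the right converges, together with the observation that the prefactor is a finite constant.

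First I would note that, for the threshold $\frac{1}{1-p}$ to be finite and the hypothesis $1 < c < \frac{1}{1-p}$ to be non-vacuous, we must have $0 < p < 1$; I would state this explicitly. Under this assumption, multiplying the inequality $c < \frac{1}{1-p}$ by the positive quantity $(1-p)$ shows that the hypothesis is \emph{exactly} the statement that the common ratio $r := c(1-p)$ satisfies $0 < r < 1$. Hence the geometric series converges, and
\[
\sum_{\ell = k}^{\infty} \bigl( c(1-p) \bigr)^{\ell} = \frac{\bigl( c(1-p) \bigr)^{k}}{1 - c(1-p)},
\]
a finite number for each fixed $k$.

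Next I would check that the prefactor $\frac{2c p}{(c-1)(1-p)^k}$ is itself finite: the denominator never vanishes because $c > 1$ forces $c - 1 > 0$, while $(1-p)^k$ is a strictly positive constant since $p < 1$ and $k$ is finite. Multiplying the finite prefactor by the finite sum yields a finite upper bound on $\mathbb{E}[Y_{1, k}]$, and an expectation dominated by a finite quantity is finite; this establishes the claim.

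The hard part, such as it is, is not the argument but its interpretation: there is essentially no obstacle once~\eqref{eq-upperX1k} is in hand, because the content of the lemma is precisely the recognition that $c = \frac{1}{1-p}$ is the radius of convergence of the underlying geometric series. Thus the single-player latency stays finite exactly while the protocol's slow-down factor $c$ stays below the reciprocal failure probability $\frac{1}{1-p}$, and one expects it to diverge at and beyond that threshold (mirroring Theorem~\ref{theorem:persistentexpectedtime}). The only point I would be careful to emphasize is that finiteness is asserted for each fixed finite $k$ and is not uniform in $k$, since the factor $(1-p)^{-k}$ grows without bound as $k \to \infty$.
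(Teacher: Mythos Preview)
Your proposal is correct and follows exactly the paper's approach: the lemma is stated immediately after inequality~\eqref{eq-upperX1k} as a direct consequence of it, and the only content is the observation that the geometric series $\sum_{\ell \geq k} (c(1-p))^{\ell}$ converges precisely when $c(1-p) < 1$. Your write-up is in fact more careful than the paper's, which gives no further argument beyond pointing to the inequality.
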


\subsubsection{The case $n=2$.} Fix $k_1'>0$ and assume $c \in [1, 2]$ (so that we can apply Corollary \ref{usualdomination}). When two players are pending (i.e. Alice and one other player), we have, for all $i = 0, 1, 2, \ldots, k_1'-1$,
\begin{displaymath}
\mathbb{E}[Y_{2, i}] = \lfloor 2c^i \rfloor + p(1-p) \mathbb{E}[Y_{1, i+1}] + (1-2p(1-p)) \mathbb{E}[Y_{2, i+1}]
\end{displaymath}

Set $\delta = 1-2p(1-p)$. Multiplying the corresponding equation for $\mathbb{E}[Y_{2, i}]$ by $\delta^i$, for each $i = 0, 1, 2, \ldots, k_1'-1$ and adding up, we get 

\begin{equation}
\mathbb{E}[Y_{2, 0}] = \sum_{i=0}^{k_1'-1} \delta^i \lfloor 2c^i \rfloor + p(1-p) \sum_{i=0}^{k_1'-1} \delta^i \mathbb{E}[Y_{1, i+1}] + \delta^{k_1'} \mathbb{E}[Y_{2, k_1'}].
\end{equation}
By the second inequality of Corollary \ref{usualdomination} for $n=2$ and $k=0$, we get
\begin{equation} \label{eq-upperX20}
\mathbb{E}[Y_{2, 0}] \leq \sum_{i=0}^{k_1'-1} \delta^i \lfloor 2c^i \rfloor + p(1-p) \sum_{i=0}^{k_1'-1} \delta^i \mathbb{E}[Y_{1, i+1}] + \delta^{k_1'} c^{k_1'-1}(c+1) \mathbb{E}[Y_{2, 0}]. 
\end{equation}
Observe now that, if we have $1<c<\frac{1}{1-p}$, then, by Lemma \ref{lemma:1playerexpectedtime}, the terms $\sum_{i=0}^{k_1'-1} \delta^i \lfloor 2c^i \rfloor + p(1-p) \sum_{i=0}^{k_1'-1} \delta^i \mathbb{E}[Y_{1, i+1}]$ in the above inequality are finite and strictly positive. Therefore, $\mathbb{E}[Y_{2, 0}]$ (which is also strictly positive), will be finite if, in addition to $c<\frac{1}{1-p}$ and $c \in [1, 2]$, the following inequality holds:
\begin{equation} \label{eq-fork1'}
\delta^{k_1'} c^{k_1'-1}(c+1) < 1.
\end{equation} 
Taking $k_1' \to \infty$ (in fact, given $p, c$, we can choose a minimum, finite value for $k_1'$ so that the above inequality holds, see also Appendix \ref{subsection:expectedpassingtime}), we have that, if $c$ satisfies $c < \frac{1}{1-2p(1-p)}$, and also $c< \frac{1}{1-p}$ (so that $\mathbb{E}[Y_{1, i}]$ is finite for all finite $i$), and $c \in (0, 2]$ (so that we can apply Corollary \ref{usualdomination}), then $\mathbb{E}[Y_{2, 0}]$ is finite. In fact, we can prove the following more general result:

\begin{lemma} \label{lemma:2playersexpectedtime}
If $1< c< \min\left\{\frac{1}{1-p}, \frac{1}{1-2p(1-p)}, 2 \right\}$, then $\mathbb{E}[Y_{2, k}]$ is finite, for any (finite) $k$.
\end{lemma}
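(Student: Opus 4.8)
The plan is to mimic the argument already carried out for $\mathbb{E}[Y_{2,0}]$, but now for an arbitrary starting index $k$, using the recursion together with Corollary \ref{usualdomination} to bound the tail term. First I would fix $k' > k$ (to be sent to infinity later) and write down, for each $i = k, k+1, \ldots, k'-1$, the one-step recursion
\begin{displaymath}
\mathbb{E}[Y_{2, i}] = \lfloor 2c^i \rfloor + p(1-p) \mathbb{E}[Y_{1, i+1}] + \delta \, \mathbb{E}[Y_{2, i+1}],
\end{displaymath}
where $\delta = 1 - 2p(1-p)$, exactly as in the $k=0$ case. The only difference is the index shift: when two players are pending at non-trivial time $s_i$, with probability $p(1-p)$ exactly the other player succeeds (leaving Alice alone, contributing $\mathbb{E}[Y_{1,i+1}]$), with probability $p(1-p)$ Alice succeeds (this is the event whose complementary cases we track) and otherwise both stay pending, contributing $\mathbb{E}[Y_{2,i+1}]$ after the elapsed $\lfloor 2c^i\rfloor$ steps.

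Next I would multiply the equation indexed by $i$ by $\delta^{i-k}$ and sum over $i = k, \ldots, k'-1$, telescoping the $\mathbb{E}[Y_{2, \cdot}]$ terms to obtain
\begin{displaymath}
\mathbb{E}[Y_{2, k}] = \sum_{i=k}^{k'-1} \delta^{i-k} \lfloor 2c^i \rfloor + p(1-p) \sum_{i=k}^{k'-1} \delta^{i-k} \mathbb{E}[Y_{1, i+1}] + \delta^{k'-k} \mathbb{E}[Y_{2, k'}].
\end{displaymath}
I would then apply the upper bound of Corollary \ref{usualdomination} (with $n=2$, base index $k$, and $k'$) to replace $\mathbb{E}[Y_{2,k'}]$ by $c^{k'-k-1}(c+1)\,\mathbb{E}[Y_{2,k}]$, giving an inequality in which $\mathbb{E}[Y_{2,k}]$ appears on both sides. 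Under the hypothesis $c < \frac{1}{1-p}$ and $c \le 2$, Lemma \ref{lemma:1playerexpectedtime} guarantees each $\mathbb{E}[Y_{1,i+1}]$ is finite, so both finite sums are finite and strictly positive; the remaining issue is to absorb the self-referential tail term. This is handled by observing that the coefficient $\delta^{k'-k} c^{k'-k-1}(c+1) = \frac{c+1}{c}(\delta c)^{k'-k}$ tends to $0$ as $k' \to \infty$, precisely because $\delta c = c(1-2p(1-p)) < 1$ by the hypothesis $c < \frac{1}{1-2p(1-p)}$. Hence for $k'$ large enough this coefficient is strictly below $1$, and rearranging yields a finite upper bound on $\mathbb{E}[Y_{2,k}]$.

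The only genuine subtlety — and the step I would treat most carefully — is justifying that $\mathbb{E}[Y_{2,k}]$ is actually finite rather than merely bounded above formally: the manipulation assumes the quantity is finite when we subtract $\delta^{k'-k}c^{k'-k-1}(c+1)\mathbb{E}[Y_{2,k}]$ from both sides. To make this rigorous I would argue via the truncated process (capping the horizon at $s_{k'}$ and forcing success there), for which all expectations are trivially finite, derive the same inequality for the truncated quantities, and then take $k' \to \infty$ using monotone convergence; the vanishing of the tail coefficient then shows the limit is finite. The inequalities in Corollary \ref{usualdomination} hold in both directions, so no extra work is needed beyond selecting the upper bound. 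Everything else is a routine repeat of the displayed $k=0$ computation with the index shift $i \mapsto i - k$ in the powers of $\delta$.
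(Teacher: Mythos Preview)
Your argument is correct, but it does considerably more work than the paper's proof. The paper simply observes that the preceding discussion already establishes $\mathbb{E}[Y_{2,0}]<\infty$ under the stated hypotheses, and then invokes the upper inequality of Corollary~\ref{usualdomination} once, with base index $0$:
\[
\mathbb{E}[Y_{2,k}] \;\le\; c^{\,k-1}(c+1)\,\mathbb{E}[Y_{2,0}] \;<\; \infty.
\]
That is the entire proof. You instead rerun the full telescoping recursion starting at index $k$ rather than $0$, applying Corollary~\ref{usualdomination} to the tail term $\mathbb{E}[Y_{2,k'}]$ relative to $\mathbb{E}[Y_{2,k}]$; this is valid but redundant, since the corollary already transfers finiteness from $k=0$ to every $k$ in a single step. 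What your longer route does buy is the explicit attention you pay to the circularity in subtracting a multiple of $\mathbb{E}[Y_{2,k}]$ from both sides (your truncation-plus-monotone-convergence fix), a point the paper leaves implicit in its treatment of $\mathbb{E}[Y_{2,0}]$ as well.
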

\proof By the the above arguments, when $1< c< \min\left\{\frac{1}{1-p}, \frac{1}{1-2p(1-p)}, 2 \right\}$, $\mathbb{E}[Y_{2, 0}]$ is finite. But, by the second inequality of Corollary \ref{usualdomination}, we also have that $\mathbb{E}[Y_{2, k}] \leq c^{k-1}(c+1) \mathbb{E}[Y_{2, 0}]$, which completes the proof. \qed

\subsubsection{The case $n=3$.} Fix $k_2'>0$ and assume $c \in [1, 2]$. When all three players are pending, we have, for all $i = 0, 1, 2, \ldots, k_2'-1$,
\begin{displaymath}
\mathbb{E}[Y_{3, i}] = \lfloor 2c^i \rfloor + 2p(1-p)^2 \mathbb{E}[Y_{2, i+1}] + (1-3p(1-p)^2) \mathbb{E}[Y_{3, i+1}].
\end{displaymath}
Set $\beta = 1-3p(1-p)^2$. Multiplying the corresponding equation for $\mathbb{E}[Y_{3, i}]$ by $\beta^i$, for all $i = 0, 1, 2, \ldots, k_2'-1$ and adding up, we get 
\begin{displaymath}
\mathbb{E}[Y_{3, 0}] = \sum_{i=0}^{k_2'-1} \beta^i \lfloor 2c^i \rfloor + 2p(1-p)^2 \sum_{i=0}^{k_2'-1} \beta^i \mathbb{E}[Y_{2, i+1}] + \beta^{k_2'} \mathbb{E}[Y_{3, k_2'}].
\end{displaymath}
By the second inequality of Corollary \ref{usualdomination} for $n=3$ and $k=0$, we get
\begin{equation} \label{eq-upperX30}
\mathbb{E}[Y_{3, 0}] \leq \sum_{i=0}^{k_2'-1} \beta^i \lfloor 2c^i \rfloor + 2p(1-p)^2 \sum_{i=0}^{k_2'-1} \beta^i \mathbb{E}[Y_{2, i+1}] + \beta^{k_2'} c^{k_2'-1}(c+1) \mathbb{E}[Y_{3, 0}].
\end{equation}
Observe now that, if we have $1< c<\min\left\{\frac{1}{1-p}, \frac{1}{1-2p(1-p)}, 2 \right\}$, then, by Lemma \ref{lemma:2playersexpectedtime}, the terms $\sum_{i=0}^{k_2'-1} \beta^i \lfloor 2c^i \rfloor + 2p(1-p)^2 \sum_{i=0}^{k_2'-1} \beta^i \mathbb{E}[Y_{2, i+1}]$ in the above inequality are finite and strictly positive. Therefore, $\mathbb{E}[Y_{3, 0}]$ (which is also strictly positive), will be finite if, in addition to $1< c<\min\left\{\frac{1}{1-p}, \frac{1}{1-2p(1-p)}, 2 \right\}$, the following inequality holds:
\begin{equation} \label{eq-fork2'}
\beta^{k_2'} c^{k_2'-1}(c+1) < 1.
\end{equation} 
Taking $k_2' \to \infty$ (in fact, given $p, c$, we can choose a minimum, finite value for $k_2'$ so that the above inequality holds, see also Appendix \ref{subsection:expectedpassingtime}), we have that, if $c$ satisfies $c < \frac{1}{1-3p(1-p)^2}$, and also $c< \min\left\{\frac{1}{1-p}, \frac{1}{1-2p(1-p)}, 2 \right\}$ (so that $\mathbb{E}[Y_{2, i}]$ is finite for all finite $i$), then $\mathbb{E}[Y_{3, 0}]$ is finite. Similarly to the proof of Lemma \ref{lemma:2playersexpectedtime}, we can prove the following more general result:

\begin{theorem} \label{theorem:3playersexpectedtime}
If $1< c< \min\left\{\frac{1}{1-p}, \frac{1}{1-2p(1-p)}, \frac{1}{1-3p(1-p)^2}, 2 \right\}$, then $\mathbb{E}[Y_{3, k}]$ is finite, for any (finite) $k$. In particular, the expected latency of Alice when all players (including Alice herself) use protocol ${\cal P}(c, p)$ is finite.
\end{theorem}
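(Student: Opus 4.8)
The proof of Theorem \ref{theorem:3playersexpectedtime} follows exactly the same pattern already established for the cases $n=1$ and $n=2$, so the plan is to complete the induction that the preceding lemmas have set up. The first step is to observe that the hypothesis $1 < c < \min\left\{\frac{1}{1-p}, \frac{1}{1-2p(1-p)}, \frac{1}{1-3p(1-p)^2}, 2\right\}$ in particular satisfies the hypothesis of Lemma \ref{lemma:2playersexpectedtime}, so that $\mathbb{E}[Y_{2, i}]$ is finite for every finite $i$. This guarantees that the two leading sums in inequality (\ref{eq-upperX30}), namely $\sum_{i=0}^{k_2'-1} \beta^i \lfloor 2c^i \rfloor$ and $2p(1-p)^2 \sum_{i=0}^{k_2'-1} \beta^i \mathbb{E}[Y_{2, i+1}]$, are finite and strictly positive for each fixed $k_2'$.

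The second step is to exploit inequality (\ref{eq-fork2'}). The key algebraic point is that, since $c < \frac{1}{1-3p(1-p)^2} = \frac{1}{\beta}$ implies $\beta c < 1$, the product $\beta^{k_2'} c^{k_2'-1}(c+1) = \frac{c+1}{c}(\beta c)^{k_2'}$ tends to $0$ as $k_2' \to \infty$. Hence there exists a finite value of $k_2'$ for which $\beta^{k_2'} c^{k_2'-1}(c+1) < 1$. Fixing such a $k_2'$, inequality (\ref{eq-upperX30}) can be rearranged to isolate $\mathbb{E}[Y_{3, 0}]$ on the left-hand side, dividing by the positive quantity $1 - \beta^{k_2'} c^{k_2'-1}(c+1)$; since the numerator is finite, this shows $\mathbb{E}[Y_{3, 0}] < \infty$. (One should note here that $\beta > 0$ for the relevant parameter range, which is needed to justify multiplying the recurrences by $\beta^i$ and summing; this is automatic since the hypothesis forces $3p(1-p)^2 < 1$.)

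The final step promotes the bound at $k=0$ to arbitrary finite $k$, exactly as in the proof of Lemma \ref{lemma:2playersexpectedtime}: by the second inequality of Corollary \ref{usualdomination} applied with $n=3$ and $k=0$, we have $\mathbb{E}[Y_{3, k}] \leq c^{k-1}(c+1)\, \mathbb{E}[Y_{3, 0}]$, so finiteness of $\mathbb{E}[Y_{3, 0}]$ yields finiteness of $\mathbb{E}[Y_{3, k}]$ for every finite $k$. The statement about Alice's expected latency when all three players use ${\cal P}(c,p)$ then follows because that latency is precisely $\mathbb{E}[Y_{3, 0}]$ (the additional time after $s_{-1} = 0$ needed when all three players are pending).

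I expect no genuine obstacle in this particular step, since it is structurally identical to the $n=2$ case and the hard analytic work has already been absorbed into Corollary \ref{usualdomination} and Lemma \ref{lemma:2playersexpectedtime}. The only point requiring mild care is the telescoping argument that derives the closed form for $\mathbb{E}[Y_{3,0}]$: multiplying the recurrence for $\mathbb{E}[Y_{3,i}]$ by $\beta^i$ and summing over $i = 0, \ldots, k_2'-1$ must leave exactly the surviving boundary term $\beta^{k_2'}\mathbb{E}[Y_{3,k_2'}]$ together with the two driving sums, which one verifies by checking that the intermediate $\mathbb{E}[Y_{3,i}]$ terms cancel in pairs. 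The real mathematical content — establishing that the threshold $c < \frac{1}{1-3p(1-p)^2}$ is the correct one — is encapsulated in the requirement $\beta c < 1$, and it is worth emphasizing that this is the tightest of the three constraints precisely because $3p(1-p)^2$ is the success probability governing the three-player collision dynamics.
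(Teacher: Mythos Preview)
Your proposal is correct and follows essentially the same approach as the paper: invoke Lemma~\ref{lemma:2playersexpectedtime} to make the driving sums in (\ref{eq-upperX30}) finite, use $\beta c<1$ to find a $k_2'$ satisfying (\ref{eq-fork2'}) and solve for $\mathbb{E}[Y_{3,0}]$, then lift to general $k$ via the second inequality of Corollary~\ref{usualdomination}. One small caveat: your closing remark that $c<\tfrac{1}{1-3p(1-p)^2}$ is ``the tightest of the three constraints'' is not universally true across all $p$ (for $p$ near $1$ the two-player bound can bind first), but this is commentary and does not affect the argument.
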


\subsection{Feasibility}

We first show that there are values for $p$ and $c$, such that the following inequalities hold at the same time:
\begin{displaymath}
1<c< \min\left\{\frac{1}{1-p}, \frac{1}{1-2p(1-p)}, \frac{1}{1-3p(1-p)^2} \right\}
\end{displaymath}
and 

\begin{displaymath}
\frac{1}{1 - (1-p)^2} < c \leq 2.
\end{displaymath}
By Theorem \ref{theorem:3playersexpectedtime} and Theorem \ref{theorem:persistentexpectedtime}, if all the above inequalities hold, then $\mathbb{E}[Y_{3, 0}]$ is finite, while $\mathbb{E}[Y'_{3, 0}]$ is infinite.

For $p = 3/4$, the above inequalities become: $1<c< \min\{4, 8/5, 64/55\} \approx 1.163$ and $1.066 \approx 16/15 < c \leq 2$. Therefore, selecting $p = 3/4$ and $c = 1.1$, we have an anonymous age-based protocol that has finite expected latency and that prevents players from unilaterally switching to a persistent protocol. In fact we prove a slightly more general result:

\begin{theorem} [restatement of Theorem \ref{maintheorem}]
Assume there are 3 players in the system, two of which use protocol ${\cal P}(1.1, 0.75)$. Then, the third player will prefer using protocol ${\cal P}(1.1, 0.75)$ over any deadline protocol ${\cal D}$.   
\end{theorem}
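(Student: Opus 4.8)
The plan is to compare the two expected latencies directly and show the deadline option is always infinitely worse. By Theorem~\ref{theorem:3playersexpectedtime} together with the feasibility check, for $c=1.1$ and $p=0.75$ the expected latency of Alice when all three players use ${\cal P}={\cal P}(1.1,0.75)$ is finite (indeed at most $2759$). Hence it suffices to prove that for \emph{every} deadline protocol ${\cal D}$, say with deadline $t_0$, Alice's expected latency $C^{({\cal P}_{-3},{\cal D})}_3(\vec{h}_0)$ is infinite. The case $t_0=1$ is the persistent player and is already covered by Theorem~\ref{theorem:persistentexpectedtime}, so the real work is to extend that conclusion to an arbitrary deadline, where Alice's behaviour before $t_0$ is completely unconstrained.

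First I would show that the configuration ``all three players still pending'' is reached at time $t_0$ with positive probability. As long as all three players are pending, no success can occur at a \emph{trivial} time step: the two ${\cal P}$-players each transmit with probability $1$ and therefore collide with each other, regardless of Alice's action. At each \emph{non-trivial} time $s_k<t_0$, the event that both ${\cal P}$-players transmit occurs with probability $p^2>0$ and again forces a collision independently of Alice, so again no player succeeds. Since there are only finitely many non-trivial times below $t_0$, multiplying these bounds gives $\Pr[\text{all three pending at }t_0]\ge (p^2)^m>0$, where $m$ is their number.

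Next I would condition on that event and analyse the game from time $t_0$ onward. From $t_0$ on, Alice transmits with probability $1$ in every slot, so whenever either ${\cal P}$-player transmits there is a collision; consequently neither ${\cal P}$-player can ever transmit successfully, and the configuration stays ``three pending'' until Alice herself succeeds. This is exactly the persistent-versus-two-opponents dynamics of Section~\ref{section:persistent}, now started at the first non-trivial time $s_{k_1}\ge t_0$. Alice succeeds only at a non-trivial time at which both opponents are silent, i.e. with probability $(1-p)^2$ at each such time, so her remaining latency equals $\sum_{j=k_1}^{k_1+Z}\lfloor 2c^j\rfloor$ with $Z+1$ geometric of success probability $(1-p)^2$. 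Since $c\ge 1$, this dominates $\sum_{j=0}^{Z}\lfloor 2c^j\rfloor=Y'_{3,0}$ term by term under the obvious coupling (same $Z$), so its conditional expectation is at least $\mathbb{E}[Y'_{3,0}]=\infty$ by Theorem~\ref{theorem:persistentexpectedtime}, which applies because $\frac{1}{1-(1-p)^2}=16/15<1.1\le 2$.

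Finally I would combine the two steps: since $T_3\ge 0$, conditioning on the positive-probability event above yields $C^{({\cal P}_{-3},{\cal D})}_3(\vec{h}_0)\ge \Pr[\text{all three pending at }t_0]\cdot\mathbb{E}[T_3\mid \text{all three pending at }t_0]=\infty$, which strictly exceeds the finite latency of ${\cal P}$; hence Alice strictly prefers ${\cal P}(1.1,0.75)$ to ${\cal D}$. The hard part will be the middle step: one must verify that becoming persistent against two ${\cal P}$-players genuinely freezes those opponents (so the three-pending configuration is absorbing until Alice transmits), and that the resulting process, started at an arbitrary non-trivial index $k_1$ rather than at $s_0$, still has infinite expectation. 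The term-by-term domination by $Y'_{3,0}$ is the clean way to reduce this back to Theorem~\ref{theorem:persistentexpectedtime} and avoid re-deriving the divergence from scratch.
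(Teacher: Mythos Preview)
Your proof is correct and follows the same overall architecture as the paper's: verify finiteness under ${\cal P}(1.1,0.75)$ via Theorem~\ref{theorem:3playersexpectedtime}, then show that under any deadline protocol the event ``three players still pending at $t_0$'' has positive probability, and that conditionally on it Alice faces the persistent dynamics with infinite expected latency.

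There are two small but genuine differences in execution. First, your conditioning event forces both ${\cal P}$-players to \emph{transmit} at every non-trivial step before $t_0$ (probability $p^{2}$ each), which guarantees a collision irrespective of Alice's unconstrained pre-deadline behaviour; the paper instead conditions on ``neither ${\cal P}$-player has succeeded'' and bounds it by $(1-2p(1-p))^{\xi}$. Your version is the cleaner of the two, since it makes explicit why Alice herself cannot have escaped either. Second, to show $\mathbb{E}[Y'_{3,k_1}]=\infty$, you couple on the common geometric variable $Z$ and use the termwise inequality $\lfloor 2c^{k_1+j}\rfloor\ge\lfloor 2c^{j}\rfloor$, whereas the paper invokes the lower bound of Corollary~\ref{persistentdomination} (i.e.\ Lemma~\ref{dominationlemma}) to get $\mathbb{E}[Y'_{3,\xi}]\ge c^{\xi-1}(c-1)\,\mathbb{E}[Y'_{3,0}]$. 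Your coupling is more elementary and sidesteps the floor-function estimates of Lemma~\ref{dominationlemma}; the paper's route, on the other hand, yields an explicit multiplicative comparison that it reuses elsewhere. Either way, the reduction to Theorem~\ref{theorem:persistentexpectedtime} is the same.
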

\begin{proof} Extending the notation used in the previous sections, let $Y^{{\cal D}}_{3, 0}$ (respectively $Y_{3, 0}$) be the time needed for the third player to successfully transmit when she uses protocol ${\cal D}$ (respectively protocol ${\cal P}(1.1, 0.75)$). Furthermore, let $Y'_{3, k}$, $k \in \{0, 1, \ldots\}$, be the additional time after $s_{k-1}$ (i.e. the $(k-1)$-th non-trivial transmission time) that the third player needs to successfully transmit when she uses a deadline protocol with deadline $1$.

Since $c = 1.1$ and $p = 0.75$, we have that $1<c< \min\left\{\frac{1}{1-p}, \frac{1}{1-2p(1-p)}, \frac{1}{1-3p(1-p)^2} \right\}$ and $\frac{1}{1 - (1-p)^2} < c \leq 2$. Therefore, by Theorems \ref{theorem:persistentexpectedtime} and \ref{theorem:3playersexpectedtime}, we have that $\mathbb{E}[Y_{3, 0}]$ is finite and $\mathbb{E}[Y'_{3, 0}] = \infty$, which means that the third player prefers using ${\cal P}(1.1, 0.75)$ over a deadline protocol with deadline $1$.

We now prove that the third player prefers using ${\cal P}(1.1, 0.75)$ over any deadline protocol ${\cal D}$ with deadline $t_0 = t_0({\cal D})$ as well. Let ${\cal E}$ be the event that none of the first two players has successfully transmitted before $t_0$. Let also $\xi = \xi(t_0)$ be the number of times $t$ such that ${\cal P}(1.1, 0.75)_t = p = 0.75$ (i.e. the protocol ${\cal P}(1.1, 0.75)$ suggests transmitting with probability less than 1) before time $t_0$ (i.e. $\xi(t_0)$ is the number of non-trivial transmissions before $t_0$). We can see that 
\begin{displaymath}
\Pr({\cal E}) \geq (1-2p(1-p))^{\xi}.
\end{displaymath} 
In fact, this lower bound is quite crude, since it does not take into account the third player, so the probability that one of the first two players succesffully transmits during a non-trivial transmission time step when both are pending is $2p(1-p)$. We now have the following:
\begin{eqnarray}
\mathbb{E}\left[Y^{{\cal D}}_{3, 0} \right] & = & \sum_{t=0}^{\infty} t \Pr\left(Y^{{\cal D}}_{3, 0} = t \right) \nonumber \\
& \geq & \Pr({\cal E}) \sum_{t=\tau}^{\infty} t \Pr\left(\ Y^{{\cal D}}_{3, 0} = t | {\cal E} \right) = \Pr({\cal E}) \sum_{t=\tau}^{\infty} t \Pr(Y'_{3, \xi} = t) \nonumber \\
& \geq & \Pr({\cal E}) \sum_{t=0}^{\infty} t \Pr(Y'_{3, \xi} = t) - t_0^2 = \Pr({\cal E}) \mathbb{E}[Y'_{3, \xi}] -t_0^2 \nonumber \\
& \geq & \Pr({\cal E}) c^{\xi-1}(c-1) \mathbb{E}[Y'_{3, 0}] -t_0^2 = \infty \nonumber
\end{eqnarray}
where in the last inequality we used the first inequality of Corollary \ref{usualdomination}. Therefore, the third player prefers using ${\cal P}(1.1, 0.75)$ over ${\cal D}$ as well. Since ${\cal D}$ is arbitrary, the proof is complete. 
\qed
\end{proof}

In Appendix \ref{subsection:expectedpassingtime}, we show that when all three players use the protocol ${\cal P}(1.1, 0.75)$, the expected latency of a fixed player is upper bounded by 2759. It is worth noting that a naive protocol where each player transmits with constant probability, say $\frac{1}{3}$, at any time $t$, has a better expected latency than that of ${\cal P}(c, p)$, but on the other hand it does not prevent players from unilaterally switching to some deadline protocol.

\bibliographystyle{plain}
\bibliography{aloha}

\newpage

\appendix

\section{Proof of Lemma \ref{dominationlemma}} \label{lemma1proof}

By definition of the protocol ${\cal P}(c, p)$, we have that $x_{k} = \lfloor 2c^k \rfloor$, for any $k \in \{0, 1, \ldots\}$. 
Let $A_1, A_2, A_3 \in \mathbb{N}$ and $a_1, a_2, a_3 \in [0, 1)$, such that
\begin{eqnarray}
2c^{k'+j} & = & A_1+a_1, \nonumber \\
2c^{k+j} & = & A_2+a_2, \nonumber \\
c^{k'-k} & = & A_3+a_3. \nonumber 
\end{eqnarray}
We then have that $A_1+ a_1 = (A_3 + a_3)A_2 + (A_3+a_3)a_2$. Therefore,
\begin{eqnarray}
c^{k'-k-1}(c+1) x_{k+j} & = & (A_3+a_3) \left(1+\frac{1}{c} \right) A_2 \nonumber \\
& = & (A_3+a_3) A_2 + \frac{1}{c} (A_3+a_3) A_2 \nonumber \\
& = & A_1+a_1 - (A_3+a_3)a_2 + \frac{1}{c} (A_3+a_3) A_2 \nonumber \\
& = & A_1+a_1 + (A_3+a_3)\left(\frac{1}{c} A_2 -a_2 \right) \nonumber \\
& \geq & A_1+a_1 + (A_3+a_3)\left(\frac{2}{c} -a_2 \right) \geq A_1 + a_1, \nonumber
\end{eqnarray}
where in the first inequality we used the fact that $A_2 \geq 2$ and $a_2 < 1$. This completes the proof of the first inequality of the Lemma. 

The proof for the second inequality of the Lemma is similar. In particular,
\begin{eqnarray}
c^{k'-k-1}(c-1) x_{k+j} & = & (A_3+a_3) \left(1-\frac{1}{c} \right) A_2 \nonumber \\
& = & (A_3+a_3) A_2 - \frac{1}{c} (A_3+a_3) A_2 \nonumber \\
& = & A_1+a_1 - (A_3+a_3)a_2 - \frac{1}{c} (A_3+a_3) A_2 \nonumber \\
& = & A_1+a_1 - (A_3+a_3)\left(\frac{1}{c} A_2 +a_2 \right) \nonumber \\
& \leq & A_1+a_1 - \left(\frac{2}{c} +a_2 \right) \leq A_1 \nonumber
\end{eqnarray}
where in the above inequality we used the fact that, since $k'>k$ and $c \in [1,2]$, we have that $A_3+a_3 \geq 1$ and $A_2 \geq 2$. This completes the proof.

\section{Proof of Theorem \ref{theorem:persistentexpectedtime}} \label{appendix:persistentexpectedtime}

The following corollary, which is a direct consequence of Lemma \ref{dominationlemma}, will be useful for our analysis.

\begin{corollary} \label{persistentdomination}
For any $k, k' \in \{0, 1, \ldots\}$ with $k' > k$, and $c \in [1, 2]$ we have that 
\begin{displaymath}
c^{k'-k-1}(c-1) \mathbb{E}[Y'_{3, k}] \leq \mathbb{E}[Y'_{3, k'}] \leq c^{k'-k-1}(c+1) \mathbb{E}[Y'_{3, k}]. 
\end{displaymath}
\end{corollary}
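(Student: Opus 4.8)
The plan is to reduce the Corollary to a termwise application of Lemma \ref{dominationlemma}, using the explicit description of $Y'_{3,k}$ already hinted at in the Remark. First I would record the law of $Y'_{3,k}$ for general $k$. Since Alice is persistent she transmits at every step, so while she is pending no other player can ever be the unique transmitter; hence Alice succeeds exactly at the first non-trivial transmission time among $s_k, s_{k+1}, \ldots$ at which both opponents stay silent, an event of probability $(1-p)^2$ independently at each such step (at the trivial times the opponents transmit with probability $1$, forcing a collision). Letting $Z$ be the number of non-trivial times at which Alice fails, $Z$ is geometric with success probability $(1-p)^2$, generalizing the Remark's computation for $k=0$, and \emph{its law does not depend on $k$}. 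Measuring from $s_{k-1}$ and recalling $s_m - s_{m-1} = x_m$, we get $Y'_{3,k} = \sum_{j=k}^{k+Z} x_j$, so that
\begin{displaymath}
\mathbb{E}[Y'_{3,k}] = \sum_{z=0}^{\infty} \bigl(1-(1-p)^2\bigr)^z (1-p)^2 \sum_{i=0}^{z} x_{k+i}.
\end{displaymath}

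Next I would invoke Lemma \ref{dominationlemma} with the same $k, k'$ and with its free index set to $i$, obtaining for every $i \geq 0$
\begin{displaymath}
c^{k'-k-1}(c-1)\, x_{k+i} \leq x_{k'+i} \leq c^{k'-k-1}(c+1)\, x_{k+i},
\end{displaymath}
where the multiplicative constant depends only on $k'-k$ and not on $i$. Summing these inequalities over $i = 0, \ldots, z$ (all terms being positive) and then over $z$ against the geometric weights $\bigl(1-(1-p)^2\bigr)^z (1-p)^2$ — which are identical in the expressions for $\mathbb{E}[Y'_{3,k'}]$ and $\mathbb{E}[Y'_{3,k}]$ — lets me pull the common constant outside the double sum and conclude
\begin{displaymath}
c^{k'-k-1}(c-1)\, \mathbb{E}[Y'_{3,k}] \leq \mathbb{E}[Y'_{3,k'}] \leq c^{k'-k-1}(c+1)\, \mathbb{E}[Y'_{3,k}].
\end{displaymath}

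The one point to get right — and the step I would flag as the main obstacle — is the observation that the geometric law of $Z$ is the same for every starting index $k$; this is exactly what makes $\mathbb{E}[Y'_{3,k}]$ and $\mathbb{E}[Y'_{3,k'}]$ differ only through the $x_j$ factors, so that the pointwise control of Lemma \ref{dominationlemma} transfers directly to the expectations. I would also note that, since all $x_j$ are positive, the termwise inequalities and their summation remain valid in $[0,\infty]$; hence the Corollary holds even in the regime where $\mathbb{E}[Y'_{3,k}] = \infty$, which is precisely the case relevant to Theorem \ref{theorem:persistentexpectedtime}. Everything else is routine manipulation of nonnegative series.
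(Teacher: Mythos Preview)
Your argument is correct and is precisely the unpacking the paper has in mind when it declares the Corollary ``a direct consequence of Lemma~\ref{dominationlemma}'': express $\mathbb{E}[Y'_{3,k}]$ as a nonnegative series in the $x_{k+i}$ with weights independent of $k$, then apply Lemma~\ref{dominationlemma} termwise. Your remark that the inequalities remain valid in $[0,\infty]$ is also the right observation for the later use in Theorem~\ref{theorem:persistentexpectedtime}.
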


For the proof of Theorem \ref{theorem:persistentexpectedtime}, fix $k'>0$. By well known properties of expectation, we have, for all $i = 0, 1, \ldots, k'-1$, 
\begin{displaymath}
\mathbb{E}[Y'_{3, i}] = \lfloor 2c^i \rfloor + (1 - (1-p)^2) \mathbb{E}[Y'_{3, i+1}].
\end{displaymath}
Set $\gamma = 1 - (1-p)^2$. Multiplying the corresponding equation for $\mathbb{E}[Y'_{3, i}]$ by $\gamma^i$, for each $i = 0, 1, 2, \ldots, k'-1$ and adding up, we get 
\begin{displaymath}
\mathbb{E}[Y'_{3, 0}] = \sum_{i=0}^{k'-1} \gamma^i \lfloor 2c^i \rfloor + \gamma^{k'} \mathbb{E}[Y'_{3, k'}].
\end{displaymath}
By the first inequality of Corollary \ref{persistentdomination} for $k=0$, we get
\begin{displaymath}
\mathbb{E}[Y'_{3, 0}] \geq \sum_{i=0}^{k'-1} \gamma^i \lfloor 2c^i \rfloor + \gamma^{k'} c^{k'-1}(c-1) \mathbb{E}[Y'_{3, 0}].
\end{displaymath}
Observe now that, since $\sum_{i=0}^{k'-1} \gamma^i \lfloor 2c^i \rfloor$ in the above equation is strictly positive, $\mathbb{E}[X'_{3, 0}]$ (which is also strictly positive), will be $\infty$ if the following inequality holds:
\begin{displaymath}
\gamma^{k'} c^{k'-1}(c-1) \geq 1.
\end{displaymath} 
Taking $k' \to \infty$, we have that, for any constant $c$ such that $c > \frac{1}{1 - (1-p)^2}$, we have that $\mathbb{E}[Y'_{3, 0}] = \infty$, which completes the proof of Theorem \ref{theorem:persistentexpectedtime}.

\section{An upper bound on the expected latency when all players use ${\cal P}(c, p)$} \label{subsection:expectedpassingtime}

In this section, we provide an upper bound on the expected latency of a fixed player when all three players use the protocol ${\cal P}(c, p)$, for some $c, p$ such that the conditions of Theorem \ref{theorem:3playersexpectedtime} and Theorem \ref{theorem:persistentexpectedtime} are satisfied, i.e. $1<c< \min\left\{\frac{1}{1-p}, \frac{1}{1-2p(1-p)}, \frac{1}{1-3p(1-p)^2} \right\}$ and $\frac{1}{1 - (1-p)^2} < c \leq 2$.

By equation (\ref{eq-upperX1k}) and using the fact that, by assumption, $c(1-p)<1$, we have that, for any $k \geq 0$,
\begin{eqnarray}
\mathbb{E}[Y_{1, k}] & \leq & \frac{2c p}{(c-1) (1-p)^k} \sum_{\ell = k}^{\infty} \left( c^{\ell} (1-p)^{\ell} \right) \nonumber \\
& = & \frac{2c p}{(c-1) (1-p)^k} \frac{c^k (1-p)^k}{1 - c(1-p)} \nonumber \\
& = & \frac{2cp}{(c-1) (1 - c(1-p))} \left( \frac{c}{1-p}\right)^k. \label{eq-realupperX1k}
\end{eqnarray}

By equation (\ref{eq-upperX20}), we also have
\begin{eqnarray}
&& \left(1 - \delta^{k_1'} c^{k_1'-1}(c+1) \right) \mathbb{E}[Y_{2, 0}] \nonumber \\
&& \qquad \leq \sum_{i=0}^{k_1'-1} 2 \delta^i c^i + p(1-p) \sum_{i=0}^{k_1'-1} \delta^i \mathbb{E}[Y_{1, i+1}] \nonumber \\
&& \qquad \leq \sum_{i=0}^{k_1'-1} 2 \delta^i c^i + \frac{2c^2p^2}{(c-1) (1 - c(1-p))} \sum_{i=0}^{k_1'-1} \left( \frac{\delta c}{1-p}\right)^i \nonumber \\
&& \qquad \leq 2 \frac{(\delta c)^{k_1'}-1}{\delta c-1} + \frac{2c^2p^2}{(c-1) (1 - c(1-p))} \frac{1 - \left(\frac{\delta c}{1-p}\right)^{k_1'}}{1 - \frac{\delta c}{1-p}}, \nonumber
\end{eqnarray}
where in the second equality we used the upper bound from equation (\ref{eq-realupperX1k}). Rearranging, we have
\begin{eqnarray}
\mathbb{E}[Y_{2, 0}] & \leq & \frac{2\frac{1 - (\delta c)^{k_1'}}{1 - \delta c} + \frac{2c^2p^2}{(c-1) (1 - c(1-p))} \frac{1 - \left(\frac{\delta c}{1-p}\right)^{k_1'}}{1 - \frac{\delta c}{1-p}}}{1 - \delta^{k_1'} c^{k_1'-1}(c+1) } \nonumber \\
& \stackrel{def}{=} & \Delta(c, p, k_1'). \nonumber
\end{eqnarray}
By the second inequality of Corollary \ref{usualdomination}, we then have, for any $k \geq 1$,
\begin{equation} \label{eq-realupperX2k}
\mathbb{E}[Y_{2, k}] \leq c^{k-1}(c+1) \mathbb{E}[Y_{2, 0}] \leq \Delta(c, p, k_1') c^{k-1}(c+1).
\end{equation}

As mentioned earlier, $k_1'$ must be large enough, so that equation (\ref{eq-fork1'}) is satisfied. For $c = 1.1$ and $p = 0.75$ (i.e. when all players play protocol ${\cal P}(1.1, 0.75)$), we need $k_1' \geq 2$. In particular, in the case of ${\cal P}(1.1, 0.75)$, taking $k_1'=2$, we have $\Delta(1.1, 0.75, 2) \approx 755.56$, and so the above inequality becomes $\mathbb{E}[Y_{2, k}] \leq 756 c^{k-1}(c+1)$.

Similarly, by equation (\ref{eq-upperX30}), we have
\begin{eqnarray}
&& (1 - \beta^{k_2'} c^{k_2'-1}(c+1)) \mathbb{E}[Y_{3, 0}] \nonumber \\
&& \qquad \leq \sum_{i=0}^{k_2'-1} 2 \beta^i c^i + 2p(1-p)^2 \sum_{i=0}^{k_2'-1} \beta^i \mathbb{E}[Y_{2, i+1}] \nonumber \\
&& \qquad \leq \sum_{i=0}^{k_2'-1} 2 \beta^i c^i + 2p(1-p)^2 (c+1) \Delta(c, p, k_1') \sum_{i=0}^{k_2'-1} (\beta c)^{i} \nonumber 
\end{eqnarray}
where in the second equality we used the upper bound from equation (\ref{eq-realupperX2k}). Taking $k_2' \to \infty$ and using the fact that, by assumption, $\beta c < 1$, we have 
\begin{displaymath}
\mathbb{E}[Y_{3, 0}] \leq \left(2 + 2p(1-p)^2 (c+1) \Delta(c, p, k_1')\right) \frac{1}{1- \beta c}.
\end{displaymath}
Setting now $c = 1.1, p=0.75$ (i.e. all players play protocol ${\cal P}(1.1, 0.75)$) and $k_1'=2$ (in which case $\Delta(1.1, 0.75, 2) \leq 756$), we have that $\mathbb{E}[Y_{3, 0}] \leq 2759$.


%
%
%


\end{document}